\newenvironment{proofof}[1]{\begin{proof}[of {#1}]}{\end{proof}}
\newcommand*\samethanks[1][\value{footnote}]{\footnotemark[#1]}
\newcommand{\eat}[1]{}
\newcounter{proccnt}
\begin{document}\sloppy

\title{Low-Risk Mechanisms for the Kidney Exchange Game}
\author{Hossein Efsandiari\inst{1}\thanks{Supported by NSF grant number 1218620.}, Guy Kortsarz\inst{2}\samethanks[1]}

\institute{University of Maryland
	\and
	Rutgers University  at Camden}

 \maketitle

\begin{abstract}
In this paper we consider the \emph{pairwise kidney exchange game}. This game naturally appears in situations that some service providers benefit
from pairwise allocations on a network, such as the kidney exchanges between hospitals. 

Ashlagi et al. \cite{ashlagi2013mix} present a $2$-approximation randomized truthful mechanism for this problem. This is the best known result in this setting with multiple players. However, we note that the variance 
of the utility of an agent in
this mechanism may be as large as
$\Omega(n^2)$, which is not desirable in a real application. In this paper we resolve this issue by 
providing a $2$-approximation randomized truthful mechanism in which the variance 
of the utility of each agent is at most $2+\epsilon$.

Interestingly, we could apply our technique to design a {\em deterministic} mechanism such
that, if an agent deviates from the mechanism, she does not gain more than 
$2\lceil \log_2 m\rceil$.
We call such a mechanism an {\em almost truthful} mechanism.
Indeed, in a practical scenario, an almost truthful mechanism is likely to imply a truthful mechanism.
We believe that our approach can be used to design low risk or almost truthful mechanisms for other problems.
\end{abstract}

\newpage

\section{Introduction}
Kidney transplant is the only treatment for several types of kidney 
diseases. Since people have two kidneys and can survive with only one kidney, 
they can potentially donate one of their kidneys. 
It may be the case that a patient finds a family member or a friend 
willing to donate her kidney. Nevertheless, 
at times the kidney's donor is not compatible with the patient. 
These patient-donor pairs create a list of incompatible pairs.
Consider two incompatible patient-donor pairs. 
If the donor of the first pair is compatible with the patient of the second pair and vise-versa, we can efficiently serve both patients without affecting the donors.

In this paper we consider pairwise kidney exchange, even though
there can be a more complex 
 combinations of transplantation of kidneys,
 that involves three or more  pairs. Nevertheless, such chains are complicated to deal with in the real life applications  since they need six or more simultaneous surgeries. 

To make the pool of donor-patient pairs larger,
hospitals combine their lists of pairs to one big pool,
trying to increase the number of treated 
patients by exchanging pairs from different hospitals.
This process is managed by some national supervisor.
A centralized mechanism can look at all of the hospitals 
together and increase the total number of kidney exchanges. 
The problem is that for a hospital its key interest is to increase the number of its own served patients. 
Thus, the hospital may not report some  patient-donors 
pairs, namely, the hospital may report a partial list.
This partial list is then matched 
by the national supervisors. Undisclosed 
set of pairs are matched by the hospitals locally, 
without the knowledge of the supervisor.
This may have a negative effect on 
the number of served patients.

 A challenging problem is to design a mechanism for the national supervisor, 
to convince the hospitals not to hide information,
and report all of their pairs. 
In fact, if hiding any subset of vertices does not increases the utility of an agent, she has no intention to hide any vertex.
Moreover, in a real application, hiding vertices involves extracting the information about other agents and finding the right subset of vertices to hide, which is costly itself. Thus, in a real application, if the loss stemming from being truthful is {\em negligible}, it is likely that the hospital will absorb the small loss and remain truthful. In this work we do not define any cost for deviating form the mechanism. However, we seek to find a mechanism such that the gain by not being truthful is \emph{exponentially smaller} than the size of the problem.

\subsection{Notations and Definitions}
To model this and similar situations hospitals are called {\em agents}, and each patient-donor pair is modeled by a vertex.
Let $m$ be the number of agents. Each agent owns a disjoint set of vertices. We denote the vertex set of the $i$-th agent by $V_i$ and $\vv V = \{V_1,V_2,...,V_m\}$ is called 
the vector of vertices of the agents. Denote an instance of the kidney exchange problem by $(G,\vv V)$, where $G$ is the underlying graph and $\vv V$ is the vector of vertices of the agents.
Each vertex in $G=(V,E)$ belongs to exactly one agent. Thus, $V=\cup_{i=1}^{m}V_i$ holds.

In this game, the utility of an agent $i$ is the expected number of matched
vertices in $V_i$ and is denoted by $u_i$. 
Similarly, the utility of an agent $i$ with respect to a matching $M$ 
is the number of vertices of $V_i$ 
matched by $M$ and is denoted by 
$u_i(M)$. The social welfare of a mechanism is the size of the output matching.

A mechanism for the kidney-exchange game is the mechanism employed by the national supervisor
to choose edges among the reported vertices. 
The process is a three step process. 
First the agent expose some of their vertices. 
Then the mechanism chooses a matching on the reported graph.
Finally, each agent matches her unmatched vertices, including her non disclosed vertices, privately.

Formally, 
a kidney exchange mechanism $F$ is a function from an instance of a kidney exchange problem $(G,\vv V)$ to a matching $M$ of $G$. The mechanism $F$ may be randomized. We say a kidney exchange mechanism is truthful if no agent has incentive to hide any vertex i.e., for each agent $i$, we have 
\begin{align*}
&\forall_{V'_i \in V_i} &u_i(F(G)) \geq u_i(F(G\backslash V'_i)) + u_i(F(G\backslash V'_i), V'_i)
\end{align*}
where $u_i(F(G \backslash V'_i), V'_i)$ is the [expected] number of vertices that agent $i$ matches privately if she hides $V'_i$. We define \emph{almost truthful} mechanisms as follow. 
\begin{definition}
	We say that a mechanism is almost truthful
	if by deviating from the mechanism, an agent can gain at most 
	an additive factor of $O(\log m)$
	vertices in the revenue, with $m$, the number of agents.
\end{definition}

Consider that in a real application finding the right subset of vertices to hide is costly.
Indeed, this cost involves extracting the information of $m$ other agents. Thus, we hope that in an almost truthful mechanism agents reports the true information. Consider that, if we let the gain of deviating from the mechanism to be a constant fraction of the revenue of the agent, this gain may become considerable for agents with very large revenue. Thus, in a real application, the agents with very large revenue may prefer to form a team to find the right set to hide.

Remark that, in this paper we do not consider a cost for deviating from the mechanism, and thus, we use two different notations of truthful mechanisms and almost truthful mechanisms.

Given that 
some pairs are undisclosed,
we say a kidney exchange mechanism $F$ is $\alpha$-approximation if for every graph $G$ the number of matched vertices in the maximum matching of $G$ is at most $\alpha$ times the expected number of matched vertices in $F(G)$. This means that for every graph $G$ 
\begin{align*}
\frac {|Opt(G)|}{E[|F(G)|] }\leq \alpha,
\end{align*} 
where $Opt(G)$ is the maximum matching in graph $G$, and the expectation is over the run of the mechanism $F$.


We define the notion of \emph{bounded-risk} mechanisms as follow.
\begin{definition}
A mechanism is a bounded-risk mechanism if the variance of the utility of each agent is is bounded by a constant.
\end{definition}

\subsection{Related Work}
The model considered in this paper was initiated by 
S\"onmez and \"Unver \cite{sonmez2013market} and Ashlagi and Roth \cite{ashlagi2011individual}.
S\"onmez and \"Unver \cite{sonmez2013market} show that 
there is no
deterministic 
truthful mechanism  
that gets the maximum possible social welfare.
See Figure \ref{fig:hard}. In this example,
the number of vertices is odd. Therefore, any mechanism that provide a maximum matching leaves exactly one vertex unmatched. Consider a mechanism that leaves a vertex of 
the first agent unmatched. In this case the utility of the first agent is $2$. 
If this agent hides the fifth and the sixth vertices, 
any maximum matching matches the first vertex to the second vertex and the third vertex to the fourth vertex. Later, agent one matches the fifth and sixth vertices, privately. 
This increases the utility of the first agent to $3$, and means that such a mechanism is not truthful. 
Similarly, if the mechanism leaves a vertex of the second agent unmatched, she can increases her utility by hiding the second and third vertices and matching them privately. This shows that a mechanism that always reports a maximum matching is not truthful.

\begin{figure}
\begin{center}
\includegraphics[width=8cm]{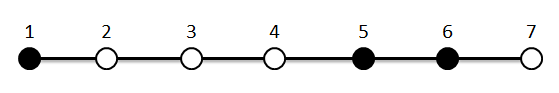}
\caption{Black vertices belong to the first agent and white vertices belong to the second agent.}
\label{fig:hard}
\end{center}
\end{figure}

Achieving social welfare optimal  mechanisms, which are truthful,
is thus not possible. 
However, achieving approximate truthful mechanisms may be possible.
Ashlagi et al. \cite{ashlagi2013mix} used the same example as in Figure \ref{fig:hard} 
to show that there is no deterministic truthful mechanism 
for the kidney-exchange game, with approximation ratio better than $2$. Moreover, they show that there is no 
randomized truthful mechanism with an approximation ratio better than $8/7$. 
They also introduce a deterministic $2$-approximation truthful mechanism for the two player kidney exchange game and a randomize $2$-approximation truthful mechanism for the multi-agent kidney exchange game. Later Caragiannis et al. \cite{caragiannis2011improved} improved the approximation ratio 
for two agents to an expected $3/2$-approximation truthful mechanism.
It is conjectured  that there is no deterministic constant-approximation truthful mechanism for the multi-agent kidney exchange game, even for three agents \cite{ashlagi2013mix}. 

Almost truthful mechanisms has been widely studied (See \cite{dughmi2011approximately}, \cite{kothari2005approximately} and \cite{lesca2012almost}) with slightly different definitions. However, all use the concept that an agent should not gain more than small amount by deviating from the truthful mechanism.

\subsection{Our Results}
First, we show that the variance of the utility of an agent in the mechanism proposed by Ashlagi et al. \cite{ashlagi2013mix} may be as large as  $\Omega (n^2)$, where $n$ is the number of vertices. The variance of the utility can be interpreted as the risk of the agent caused by the randomness in the mechanism. Indeed, in a real application agents prefer to take less risk for the same expected utility. In Section \ref{sec:rand}, we provide a tool to lower the variance of the utility of each agent in a kidney exchange mechanism while keeping the expected utility of each agent the same. The following theorem 
is an application of this tool to the mechanism proposed by Ashlagi et. al. \cite{ashlagi2013mix}.
low variance.

\begin{theorem}
\label{1}
There exists a bounded-risk truthful $2$-approximation 
mechanism for multi-agent kidney exchange. Specifically, in this mechanism the variance of the utility of each agent is at most $2+\epsilon$, where $\epsilon$ is an arbitrary small constant.
\end{theorem}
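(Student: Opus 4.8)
The plan is to start from the existing $2$-approximation randomized truthful mechanism of Ashlagi et al.\ and observe what makes its variance blow up to $\Omega(n^2)$. The intuition is that their mechanism chooses one matching out of a small family of matchings (e.g.\ it picks between two maximal matchings with some probability), so an agent's utility is a random variable that can swing by $\Theta(n)$ between the two outcomes; squaring that swing gives the quadratic variance. My strategy is therefore to decompose the single ``global'' coin flip into many independent ``local'' coin flips whose individual contributions to each agent's utility are $O(1)$, while preserving the marginal probability that each vertex is matched. If the utility of agent $i$ can be written as a sum $u_i = \sum_j X_j$ of independent bounded random variables, each contributing to the matched count of $V_i$ by at most a constant, then $\va{u_i} = \sum_j \va{X_j}$ is controlled by the number of ``groups'' rather than by their size.

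Concretely, I would invoke the variance-reduction tool promised in Section~\ref{sec:rand}: given any kidney exchange mechanism, one can partition the randomness so that the expected utility of every agent is exactly preserved (hence truthfulness and the $2$-approximation guarantee carry over verbatim), while the variance is forced down to a constant. The key steps, in order, are: (i) recall the structure of the Ashlagi et al.\ mechanism and the exact way each agent's expected utility is determined; (ii) apply the general lemma from Section~\ref{sec:rand} to this mechanism, checking that its hypotheses are met, so that per-agent expected utilities are unchanged; (iii) conclude that truthfulness is inherited, since the truthfulness inequality is stated purely in terms of expected utilities $u_i$, which are invariant under the transformation; (iv) conclude likewise that the $2$-approximation is inherited, since $\ex{|F(G)|} = \sum_i u_i$ is also unchanged; and (v) bound the resulting variance by $2+\epsilon$ by tuning the granularity parameter of the decomposition, where $\epsilon$ absorbs a controllable rounding error from splitting probabilities into finitely many independent pieces.

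The main obstacle I anticipate is step (ii): decoupling the global randomness into independent local pieces without perturbing the per-agent \emph{marginal} matching probabilities. Naively re-randomizing can change which vertices get matched together and thus shift expected utilities across agents, which would destroy both truthfulness and the approximation factor. The delicate part is to arrange the independent local coins so that each vertex's marginal probability of being matched equals its marginal probability under the original mechanism, while guaranteeing that the outcome is always a legal matching (no vertex used twice). I expect this to require working with a convex combination of matchings and expressing it as a product distribution over edge-disjoint ``gadgets,'' so that the variance contributed per gadget is $O(1)$ and the total number of gadgets touching any single agent stays bounded. The $\epsilon$ slack in the bound $2+\epsilon$ strongly suggests that achieving exactly the right marginals requires approximating the original (possibly irrational or many-valued) probabilities by finite-support independent coins, and the error of that approximation is what is being swept into $\epsilon$.
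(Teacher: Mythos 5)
Your high-level steps (iii)--(iv) are right and agree with the paper: once the transformation preserves every agent's expected utility on every input (including manipulated inputs), truthfulness and the $2$-approximation carry over for free. But the heart of your plan --- rewriting the mechanism's randomness as a product distribution over independent, edge-disjoint ``gadgets,'' so that each agent's utility becomes a sum of a bounded number of independent $O(1)$-influence variables --- is exactly the step you leave open, and it is not substantiated anywhere in the proposal. The matching constraint couples vertices globally, so re-randomizing ``locally'' while keeping every outcome a legal matching and every vertex's marginal matched-probability unchanged is precisely the obstacle you name; naming it is not resolving it, and it is far from clear that such a product decomposition of (a distribution with the marginals of) Mix and Match even exists. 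A proof needs a concrete combinatorial device at this point, and your proposal does not contain one.

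The paper's device is different and is the missing idea: a recombination lemma (Lemma \ref{lm:str}). Given two matchings $M_1,M_2$, decompose $M_1\oplus M_2$ into cycles and paths, and reassign the path pieces --- via an Eulerian orientation and edge-coloring argument on the contraction graph whose vertices are agents and whose edges are the paths --- into two new matchings $N_1,N_2$ satisfying $u_i(N_1)+u_i(N_2)=u_i(M_1)+u_i(M_2)$ and $|u_i(N_1)-u_i(N_2)|\leq 2$ for every agent $i$. Then (Lemma \ref{lm:main}) one runs the given mechanism twice independently, recombines the two outputs, and flips a fair coin between $N_1$ and $N_2$: expected utilities are preserved and $\va{y_i}\leq \va{x_i}/2+1$, since the coin can only move the utility by the bounded spread $|D_i|\leq 1$ around the conserved average. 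Iterating this pairwise combination $k=2\log n+\log(1/\epsilon)$ times drives the variance down to at most $\sigma^2/2^k + 2 - 2/2^k \leq 2+\epsilon$. This also corrects your reading of where $\epsilon$ comes from: it is not a rounding error from discretizing probabilities into finite coins, but the residue $\sigma^2/2^k\leq\epsilon$ of the initial variance after finitely many halvings, with the constant $2$ arising as the geometric sum of the additive $+1$ terms. Note that in the paper's mechanism the per-agent utility is \emph{not} a sum of independent bounded variables, so your framework could not reproduce this recursion; what is bounded is the spread between the two recombined outcomes at each layer, which is a genuinely different (and weaker, hence achievable) structural property.
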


Later, in Section \ref{sec:det}, we provide a derandomization of our mechanism. Specifically, we design an almost truthful deterministic $2$-approximation mechanism for this problem. To the best of our knowledge this is the first non-trivial deterministic mechanism for the multi-agent kidney exchange game.

\begin{theorem}
\label{2}
There exists an almost truthful deterministic $2$-approximation mechanism for multi-agent kidney exchange.
\end{theorem}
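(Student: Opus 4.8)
The plan is to derandomize the bounded-risk randomized mechanism from Theorem~\ref{1} by fixing, deterministically, the random bits that the mechanism uses, and then to argue that the resulting deterministic mechanism is \emph{almost truthful} in the sense of the definition above, i.e. no agent can gain more than an additive $2\lceil \log_2 m\rceil = O(\log m)$ vertices by deviating. The natural strategy is to recall that the randomized $2$-approximation mechanism of Ashlagi et al.\ is (presumably) a convex combination of a small family of deterministic matchings, each of which is individually truthful \emph{up to} some controlled slack; truthfulness of the randomized mechanism holds only in expectation, so the expected-utility inequality $u_i(F(G)) \ge u_i(F(G\setminus V_i')) + u_i(F(G\setminus V_i'),V_i')$ averages over the randomness. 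When we derandomize, we lose the averaging and a single agent may benefit from hiding vertices against the one realized matching; the whole point is to bound this benefit.

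First I would make precise how the variance-reduction tool of Section~\ref{sec:rand} represents the mechanism: it should write the output distribution as a uniform (or dyadic) mixture over a set of at most $m$, or more likely $2^{\lceil \log_2 m\rceil}$, deterministic sub-mechanisms $M_1,\dots,M_t$, each preserving the $2$-approximation guarantee and each preserving the \emph{expected} per-agent utility. The key quantitative fact I would extract is that the total variation among these outcomes, restricted to any single agent, is small --- this is exactly what bounds the variance by $2+\epsilon$ and is what will translate into the $O(\log m)$ additive guarantee. Concretely, I expect that for each agent $i$ the utilities $u_i(M_j)$ across the $t$ branches differ from the expected utility $u_i(F(G))$ by at most $O(\log m)$, because the construction spreads the ``unmatched vertex'' (the source of any truthfulness violation, as in Figure~\ref{fig:hard}) across $\Theta(m)$ branches in a balanced, tree-structured way.

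Then I would select a single branch deterministically and bound the deviation gain. Fix any branch $M_j$ as the deterministic mechanism. For any deviation $V_i'$ of agent $i$, the truthfulness-in-expectation inequality gives
\begin{align*}
u_i(F(G)) \ge u_i(F(G\setminus V_i')) + u_i(F(G\setminus V_i'),V_i').
\end{align*}
Writing $u_i(F(G)) = \tfrac{1}{t}\sum_j u_i(M_j(G))$ and similarly on the right, and using the per-branch bound $|u_i(M_j) - u_i(F)| \le O(\log m)$, I would conclude that for the chosen branch the gain from deviating is at most the spread, namely $2\lceil \log_2 m\rceil$. The factor $2$ and the exact $\lceil \log_2 m\rceil$ should come from counting: the dyadic tree of mixtures has depth $\lceil \log_2 m\rceil$, and at each level an agent's utility can shift by at most one vertex in each of the two matching ``directions,'' giving the clean bound. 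Preservation of the $2$-approximation is inherited branch-by-branch, since every $M_j$ is already a $2$-approximation by construction.

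The main obstacle will be the third step: controlling the per-agent \emph{spread} $\max_j |u_i(M_j) - u_i(F)|$ and showing it is genuinely $O(\log m)$ rather than merely implied weakly by the variance bound. A constant variance across $t$ branches only forces most branches to be close to the mean, not all of them; a single outlier branch could in principle give a large deviation gain while keeping the variance at $2+\epsilon$. So I would need to use the explicit tree structure of the variance-reduction gadget --- not just its second moment --- to argue a \emph{uniform} (worst-case over $j$) bound on the spread, and to verify that the branch I fix is one where no agent's utility is an outlier. Making this uniform bound tight enough to yield the clean $2\lceil\log_2 m\rceil$ constant, and confirming that the deviation inequality survives conditioning on a single branch (rather than only in expectation), is where the real work lies.
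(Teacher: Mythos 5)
Your plan hinges on a ``key quantitative fact'' that is false: that each branch (each fixing of the random bits) of the Mix and Match mechanism yields, for every agent $i$, a utility within $O(\log m)$ of her expected utility $u_i(F(G))$. Example~\ref{example:badMech} of the paper refutes this directly: agent $1$'s utility is $0$ in half of the branches and $n/3$ in the other half, so \emph{every} branch sits at distance $n/6$ from her mean utility. Consequently no deterministic selection of a single branch can be almost truthful, and the obstacle you flag in your last paragraph (an outlier branch) is not a technical nuisance to be engineered away --- it is the generic situation. The paper's mechanism is not ``fix a good branch'' at all. It enumerates \emph{all} branches and then merges them pairwise, layer by layer, using the symmetric-difference decomposition of Lemma~\ref{lm:str}: given two matchings $M_1,M_2$ it produces two \emph{new} matchings $N_1,N_2$ with $u_i(N_1)+u_i(N_2)=u_i(M_1)+u_i(M_2)$ and $|u_i(N_1)-u_i(N_2)|\le 2$ for every agent simultaneously, and the deterministic rule keeps whichever of $N_1,N_2$ has more edges. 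Because of the balance property, either choice moves each agent's utility by at most $1$ away from the running average, so after $\lceil\log_2 m\rceil$ merge layers the final matching (which is a newly constructed matching, not one of the original branches) is within $\lceil\log_2 m\rceil$ of every agent's expected Mix and Match utility; truthfulness-in-expectation of the core mechanism then yields the $2\lceil\log_2 m\rceil$ deviation bound, and taking the larger matching at each merge keeps the average number of edges from decreasing, preserving the $2$-approximation.

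A second missing ingredient: with fully independent labels the randomized mechanism has $2^m$ branches, so any merging tree over the branches has depth $m$, not $\lceil\log_2 m\rceil$, and the argument above would only give an additive $O(m)$ bound. The paper first replaces the $m$ independent label bits by $m$ \emph{pairwise independent} bits generated from $\lceil\log_2 m\rceil$ truly random bits (the Mix and Match analysis only needs that any two fixed agents receive different labels with probability $1/2$), so that there are at most $2m$ branches and the merging tree has depth $\lceil\log_2 m\rceil$. Without both ideas --- the pairwise-independence reduction and the merge-rather-than-select derandomization via Lemma~\ref{lm:str} --- your outline cannot reach the stated bound.
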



\section{A truthful mechanism with small utility variance}\label{sec:rand}

Ashlagi et al. in EC'10 \cite{ashlagi2013mix} study the
multi-agent kidney exchange game. They provide a polynomial time truthful $2$-approximation mechanism called Mix and Match. The 
Mix and Match mechanism is described as follows;
independently label each agent either by $1$ or $0$ 
each with probability $0.5$. Remove the edges between different agents with the same labels, i.e., for each edge $(u,v)\in E$, if $u$ and $v$ belongs to different agents and these agents have the same label, remove the edge $(u,v)$ from $G$. 
Let $G'$ be the new graph. 
Consider all matchings in $G'$ that contain a maximum matching over 
the induced subgraph of each agent separately. Output the one with the maximum cardinality.
Ties are broken serially in favor of agents with label $1$.
The following example shows that in this mechanism the variance of an agent utility may be as large as $\Omega (n^2)$.
\begin{example}\label{example:badMech}
Consider a game with three agents. Each agent has $\frac{n}{3}$ vertices, where $n$ is the number of vertices in the graph. There is a perfect matching with $\frac{n}{3}$ edges between vertices of agent $1$ and agent $2$ and there is no other edges (see Figure \ref{fig:example}). 
In this example, with probability $0.5$, agent $1$ and agent $2$ get the same label and all of the edges between these two agents are removed. 
In this case, all edges are removed and thus the utility of each agent is zero.  However, with probability $0.5$, agent $1$ and agent $2$ get different labels and we have a matching of size $\frac n 3$ between the vertices of these two agents. In this case, the utility of agent $1$ is $\frac n 3$. Therefore, the variance of the first agent utility is 
\begin{align*}
\sigma^2= 0.5(0-\frac n 6)^2+0.5(\frac n 3 -\frac n 6)^2 = \frac {n^2} {36},
\end{align*}
which is $\Omega (n^2)$.
\begin{figure}
\begin{center}
\includegraphics[width=7.8cm,height=6cm]{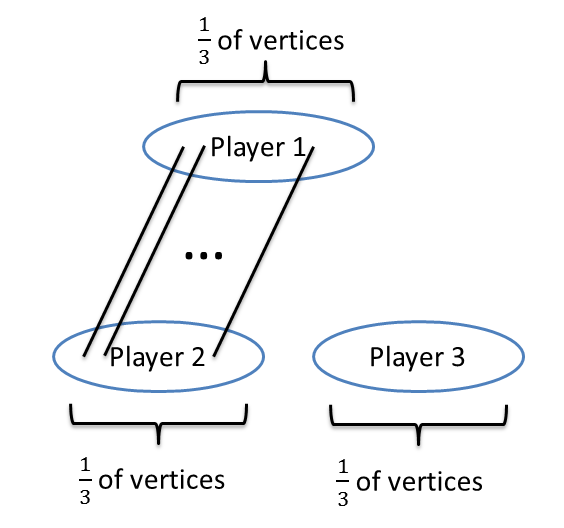}
\caption{In this example, the variance  of the first agent utility, is $\Omega (n^2)$}
\label{fig:example}
\end{center}
\end{figure}

\end{example}

%
%

Our mechanism uses a randomized truthful mechanism as a core mechanism. We take two matchings
resulting from two independent runs of the core mechanism.
These two matchings are combined into a new 
matching.
The way we choose the new matching 
is randomized too.
The new matching preserves the expected utility of each agent, and in addition, decreases the variance of the utility of each agent 
by a constant factor.
This gives a mechanism with a lower utility-variance. We repeat this procedure iteratively (See Figure \ref{fig:layers}) and decrease the variance of the utilities to $O(1)$. We show that for this purpose it is enough 
to apply the combination of two matchings mechanism a 
logarithmic number of times. For the purpose of this section, we use Mix and Match as the core mechanism and show that it gives us a truthful $2$-approximation 
mechanism such that the variance of each utility is at most $2+\epsilon$, where $\epsilon$ is an arbitrary small constant.

\begin{figure}
\begin{center}
\includegraphics[width=8cm]{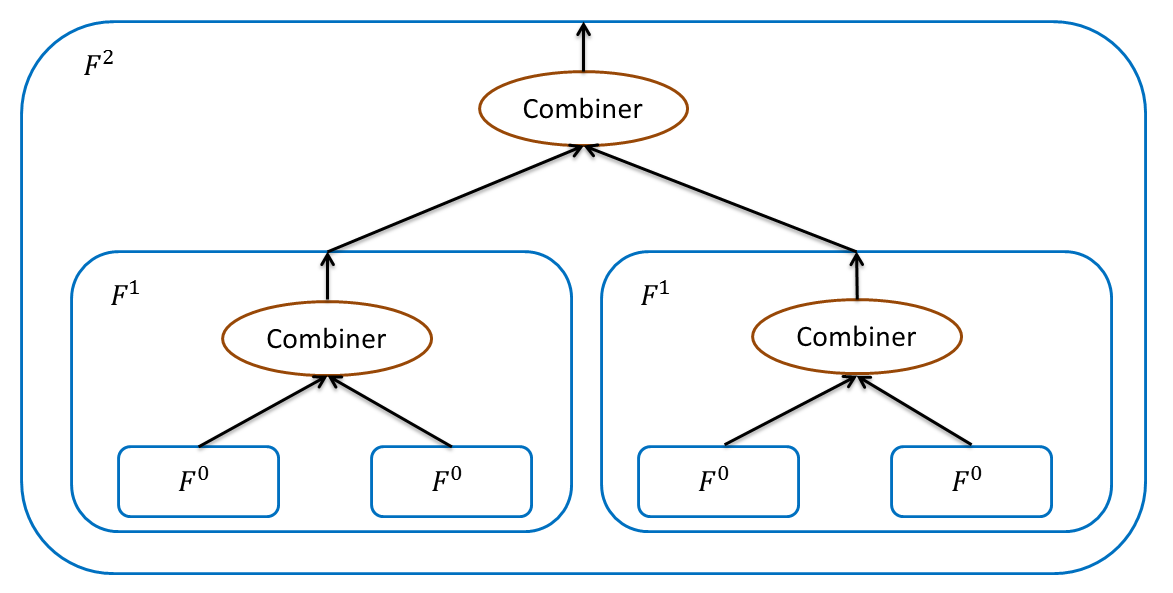}
\caption{Hierarchy of mechanisms}
\label{fig:layers}
\end{center}
\end{figure}

One can think of our mechanism as a multi-layered mechanism. The layer-$0$ mechanism is the core mechanism. In the $i$-th layer we combine two outputs of the layer $i-1$ mechanism. Lemma \ref{lm:main} shows that we can use the lower layer mechanism and create a mechanism where the variance of the utilities is almost halved. 

Note that the utility of an agent can be completely 
different for two matchings $M_1$ and $M_2$. 
Let $M_1\oplus M_2$ denote the symmetric difference of $M_1$ and $M_2$.
There may be a path from 
$u$ to $v$ in $M_1\oplus M_2$ in which $u$ is 
a vertex of agent $i$ and $v$ is a vertex of agent 
$j$ and $i\neq j$. One of the two matchings will have a utility
smaller by $1$ for agent $i$. 
As the number of such paths in $M_1\oplus M_2$ may be very large,
the difference in utility of an agent with respect to
$M_1$ and $M_2$ can be very large.
We show how to find two matchings $N_1$ and $N_2$ 
such that 
the utility of each agent with respect to these two matchings is almost equal.

Let $(G,\vv V)$ be an instance 
of the kidney exchange graph.
Consider two matchings $M_1$ and $M_2$ derived by independent 
runs of the previous layer. Let 
$P=\{p_1,p_2,\dots,p_k\}$ be a subset of distinct paths 
in 
$M_1\oplus M_2$ (ignoring cycles).
\begin{definition}
The {\em contraction graph} 
$Cont((G,\vv V),P)$ is defined as follows:
\begin{itemize}
\item Each vertex in $Cont((G,\vv V),P)$ corresponds to one agent in $(G,\vv V)$.
\item There is an edge in $Cont((G,\vv V),P)$ between agent $i$ and $j$ if and only if there is a path in $P$ that begins with a vertex of agent $i$ and ends with a vertex of agent $j$.
\end{itemize}
We call this graph {\em the contraction graph} because 
the paths are replaced by edges.
When the instance of the kidney exchange game is clear from the context, we drop $(G,\vv V)$ 
from the notation of the contraction graph.
\end{definition}

The following lemma proves that any two matchings can be 
transformed into two other matchings such that
for every agent $i$ the utility of agent $i$ in the two new matchings has a difference of at most $2$.
\begin{lemma} \label{lm:str}
Let $M_1$ and $M_2$ be two matchings of graph $G$. There exist two matchings $N_1$ and $N_2$ such that for any agent $i$ we have
\begin{itemize}
\item $|u_i(N_1)-u_i(N_2)|\leq 2$ and
\item $u_i(N_1)+u_i(N_2)=u_i(M_1)+u_i(M_2)$.
\end{itemize}
\end{lemma}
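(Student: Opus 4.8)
The plan is to work entirely inside the symmetric difference $M_1 \oplus M_2$, which decomposes into vertex-disjoint alternating paths and (even) cycles. Every vertex that is not an endpoint of one of these paths---that is, every vertex matched by a common edge of $M_1$ and $M_2$, every cycle vertex, and every internal path vertex---is matched by \emph{both} $M_1$ and $M_2$, so I would simply keep these edges in both $N_1$ and $N_2$. For each path $p$ I would then commit to placing one of the two local matchings $M_1|_p, M_2|_p$ into $N_1$ and the other into $N_2$. This choice makes $N_1, N_2$ legal matchings (the paths and the common part are vertex-disjoint) and, crucially, it yields the second property for free: every vertex is matched by $N_1$ together with $N_2$ exactly as often as by $M_1$ together with $M_2$, so $u_i(N_1)+u_i(N_2)=u_i(M_1)+u_i(M_2)$ for every agent $i$.

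With the sum fixed, only the endpoints of the paths can distinguish $N_1$ from $N_2$, so the first property reduces to a labelling problem on the contraction graph $H=Cont((G,\vv V),P)$ with $P$ the set of all paths. Here I would split the paths by parity. An even-length path matches one endpoint in $M_1|_p$ and the other endpoint in $M_2|_p$; hence it contributes $+1$ to $u_i(N_1)-u_i(N_2)$ at one of its endpoint-agents and $-1$ at the other, i.e.\ it behaves like an \emph{oriented} edge of $H$. An odd-length path matches \emph{both} of its endpoints in one local matching and \emph{neither} in the other, so it contributes the \emph{same} sign $\pm 1$ to both endpoint-agents, i.e.\ it behaves like a \emph{signed} edge of $H$. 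Writing $d_i=u_i(N_1)-u_i(N_2)$, the task becomes: orient the even-path edges and sign the odd-path edges of $H$ so that $|d_i|\le 2$ for every agent $i$.

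I would obtain such a choice by an Eulerian argument on $H$. First adjoin a dummy agent $r$ joined to every odd-degree agent, which makes all genuine agents even-degree, and take an Eulerian circuit in each connected component. Walking along the circuit, I would label the endpoints so that at every visited agent the two path-endpoints belonging to consecutive circuit edges receive opposite labels, and so cancel in $d_i$. Because the circuit pairs up the edges incident to each agent through its transitions, these cancellations leave at most one uncancelled endpoint per agent---the one coming from a dummy edge at an odd-degree agent---together with at most one parity ``defect'' per circuit, which I would route to the dummy vertex $r$ (whose imbalance is irrelevant). Tallying these contributions yields $|d_i|\le 2$, as required.

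The hard part is the odd-length paths. Unlike even paths, they force both endpoints into the same matching, so the naive ``consecutive endpoints cancel'' bookkeeping does not apply verbatim; the same-sign coupling is exactly what produces the parity defect around an Eulerian circuit (already visible for three odd paths forming a triangle, where no labelling makes all three agents balanced). The delicate points are therefore to handle odd paths whose two endpoints lie at the \emph{same} agent (self-loops of $H$), and to argue that the single unavoidable defect per circuit can always be absorbed---at the dummy vertex when odd-degree agents are present, and otherwise at a single agent where it costs exactly $2$---so that no agent's imbalance ever exceeds $2$.
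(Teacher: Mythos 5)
Your setup coincides with the paper's own: keep $M_1\cap M_2$ in both output matchings, split each component of $M_1\oplus M_2$ between $N_1$ and $N_2$, note that the sum property $u_i(N_1)+u_i(N_2)=u_i(M_1)+u_i(M_2)$ holds no matter how the per-component choices are made, and reduce the difference property to an orientation/signing problem on the contraction graph, where even paths act as oriented edges (opposite contributions at their two endpoint agents) and odd paths act as same-sign edges — all of which is exactly the paper's reduction, including the dummy edges at odd-degree agents and the Eulerian circuit. One literal slip: for cycles you say you would ``keep these edges in both $N_1$ and $N_2$''; that would make $N_1,N_2$ fail to be matchings, since a cycle vertex would get degree $2$. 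A cycle must be split into its two alternating classes, one per matching — harmless for utilities because both classes cover the same vertex set, which is precisely the paper's Case 1.

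Where you genuinely diverge is the execution of the Eulerian step, and this is also where your proposal stops short. The paper runs two \emph{separate} Eulerian arguments — an orientation argument for even paths (imbalance at most $1$) and a red/blue coloring for odd paths (imbalance at most $2$, enforced by the rule that a start vertex owning a dummy edge uses it first) — and then must explicitly repair agents whose two imbalances stack to $3$ by flipping the colors of one connected component. Your unified single-circuit scheme (consecutive endpoint labels at each visit cancel) would make that repair step unnecessary, which is a real simplification. But you assert ``tallying these contributions yields $|d_i|\le 2$'' and then immediately concede that the bookkeeping ``does not apply verbatim'' for odd paths, deferring the defect absorption and the self-loops as points still ``to argue'' — and that deferred part is the entire content of the lemma. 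The gap is fillable along the lines you indicate: at each intermediate visit the cancellation constraint, combined with the per-edge constraint (even edge: endpoint labels opposite; odd edge: endpoint labels equal), \emph{forces} the labels to propagate around the circuit, so the only possible inconsistency is at the closing vertex; there it has magnitude exactly $2$ and occurs exactly when the circuit contains an odd number of odd-path edges (your triangle example). Starting the circuit at the dummy vertex when the component contains odd-degree agents, and at an arbitrary agent otherwise, and observing that a component with no odd-degree agents has no dummy edges to delete, shows the closure defect never stacks with the $\pm 1$ loss from removing a dummy edge; odd self-loops also propagate consistently, their sign being forced by the incoming label while flipping the parity of the closure condition. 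With those verifications written out, your route yields the bound; as written, the proposal names the hard part but does not prove it.
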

\begin{proof}
We decompose $M_1 \oplus M_2$ into two different matchings $N'_1$ and $N'_2$ 
such that  $N_1'\cup N_2'=M_1\oplus M_2$. Then define 
$N_1$ and $N_2$ as 
 $N_1=N'_1\cup (M_1 \cap M_2)$ and $N_2=N'_2\cup (M_1 \cap M_2)$ respectively. 
Clearly, an edge $e$ 
 belongs to exactly one of $N_1$ or $N_2$,  if and only if $e$ belongs to exactly one of  $M_1$ or $M_2$. In addition, $e$ belongs to 
{\em both} of $N_1$ and $N_2$ if and only if it belongs to {\em both} of $M_1$ and $M_2$. This means that the equality $u_i(N_1)+u_i(N_2)=u_i(M_1)+u_i(M_2)$ holds for all agents. 
It holds true, regardless of the way we decompose $M_1\oplus M_2$ into $N'_1$ and $N'_2$. 
We now describe our approach to achieve the main property, namely, 
change  $M_1\oplus M_2$
into two matchings $N'_1$ and $N'_2$ such that  $N'_1\cup N'_2=M_1\oplus M_2$ and 
$|u_i(N'_1)-u_i(N'_2)|\leq 2$.

Consider the subgraph induced in $G$ by $M_1 \oplus M_2$. 
The degrees of vertices in this graph are either zero or one or two. 
There are three types of connected components:
cycles, even-length paths and odd-length paths. 
We explain how to decompose 
the different parts of  $M_1 \oplus M_2$  in these three cases.
Every path  $p$ decomposes into two matchings 
$M^p_1$ and $M^p_2$. In any such decompositions, all the vertices of $p$ except the endpoints are covered by both $M^p_1$ and $M^p_2$. 
\begin{itemize}
\item \textbf{Case 1: Components that are cycles.} 
Each of these cycles is the union of two matchings. It means that every other edge in each cycle belongs to one of the matchings. We add one of these matchings to $N'_1$ and we add the other one to $N'_2$. Since these two matchings cover the same set of vertices, they have the same effect on the utility of agents.
\item \textbf{Case 2: Edges of even length paths.} Let $p$ be an even size 
path between two vertices $v$ and $u$. Let  $M^p_1$ and $M^p_2$ 
be a decomposition of $p$ into matching such that 
 $M^p_1$ covers all vertices in $p$ except $u$ and $M^p_2$ covers all vertices in $p$ except $v$. For each path, we add one of $M^p_1$ and $M^p_2$ to $N'_1$ and add the other one to $N'_2$. However, the assignment cannot be 
arbitrary.
The assignment is derived by performing 
computations on the contraction graph.
We represent the selection of $M^p_1$ by directing the edge 
in the contraction graph from the agent that contains $u$ to the agent that contains $v$. Note that we deal with all even paths {\em  simultaneously}.
The difference of the outgoing and in-going degrees of each agent exactly equals the difference of her utilities caused by edges of even length paths in $N'_1$ and $N'_2$. Thus, we just need to direct edges of the contraction graph of even length paths, 
so as to minimize this difference. 
We can direct the edges of this graph such that for each vertex the difference of outgoing and in-going edges is at most one. 
This is done by adding a matching between the odd degree vertices, directing the edges through an Eulerian cycle and removing the added edges.
We adopt this strategy to get our almost balanced in and out degree
pair of matchings, derived from all even sized paths in $M_1\oplus M_2$.
\item \textbf{Case 3: Odd length paths.} 
Let $p$ be a path between vertices $v$ and $u$ which has an odd number of edges. We can decompose $p$ into two matchings $M^p_1$ and $M^p_2$ such that $M^p_1$ covers all vertices in $p$ and $M^p_2$ covers all vertices in $p$ except $v$ and $u$. Again in this case, for each path, we add one of $M^p_1$ or $M^p_2$ to $N'_1$ and add the other one to $N'_2$. 

In one of $N'_1$ and $N'_2$,  
both endpoints are matched and in the other none of the two endpoints are matched.

We represent the selection of $M^p_1$ by coloring the edge corresponding to $p$ blue. Otherwise, we color the edge red. 
Let the blue (red) degree of a vertex be the number of blue (red) edges 
touching the vertex.
The difference between the red and blue degrees of each agent,
 exactly equals  
the difference in her utilities caused by odd length paths in $M_1\oplus M_2$.

We can color the edges of any arbitrary graph with blue and red such that for each vertex the difference between the red and blue degree is at most $2$.
This again is done by adding a matching of dummy edges 
between the odd degree vertices,
and coloring every second edge in 
the Eulerian cycle red and every other second edge blue.
Then we remove the fake edges added at the beginning.
Note that the start vertex of the cycle may be touched by two 
red or two blue, edges. 
On the other hand, the other vertices in this cycle 
are touched by the same number of blue and red edges.
We use the following rule: if the start vertex has a dummy edge,
we use this edge as the first in the Euler cycle.
This is done such that the difference of blue and red degrees 
will not accumulate to $3$ ($2$ may be added to the difference 
due to the fact that
the path starts and ends in the same color, and an additional $1$ can be added to the difference when we take the 
dummy edge out).
This clearly implies a difference of $2$ in the utility of any agent 
with respect to the new matchings.
\end{itemize}

If we combine the matchings of case $2$ and case $3$, 
the difference of the utilities for any agent 
with respect to  $N'_1$ and $N'_2$
may grow up to at most $3$. We want to avoid this situation.
Let $i$ be some agent for which the utility difference is $3$. 
The difference of the utilities is derived as follows:
\begin{itemize}
\item  Agent $i$ has a difference of two between the number of red edges and blue edges. Note that this means that the other agents 
have a difference of at most $1$ between the number 
of red and blue edges,
as the the agents are not start vertices of the cycle.
\item  Agent $i$ has a difference of one between the number of outgoing and in-going edges. This means that the vertex is an odd degree vertex.
\item The effect of these two differences accumulate and cause a difference of three between $N'_1$ and $N'_2$.
\end{itemize}

In this case, we flip the color of edges in the component that contains $i$. 
This decreases {\em for i} the difference of $N'_1$ and $N'_2$ 
from $3$ to $1$. Note that any vertex that was not a start 
vertex of the Euler cycle has a difference of at most $1$ in the edge
coloring stage.
The flipping of colors still
implies that the maximum difference in the utility of every
agent that is not a beginning of a cycle, is at most $1$.
Together with the difference caused by even length paths,
the total difference is at most $2$. 
Note that cycles of two different agents with difference $2$ in
their  utility,
are disjoint.
Only one agent with
two red or two blue edges can exist in every connected component. \qed
\end{proof}

The following lemma uses lemma \ref{lm:str} to combine outcomes of two independent runs of a mechanism. 
\begin{lemma}\label{lm:main}
Let $F$ be a mechanism for the multi-agent kidney exchange game and let $x_i$ be the random variable of the utility of agent $i$ in the  mechanism $F$. Then there exist a mechanism $F^*$ such that for every input graph and every agent $i$ the following holds:
\begin{align*}
Var(y_i)\leq \frac{Var(x_i)} 2 +1,
E[y_i]= E[x_i],
\end{align*}
where $y_i$ is the random variable that indicates the utility of agent $i$ in mechanism $F^*$.
\end{lemma}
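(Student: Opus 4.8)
The plan is to build $F^*$ out of $F$ by a \emph{run-twice, balance, and flip a fair coin} construction. Concretely: run the given mechanism $F$ two times independently to obtain matchings $M_1$ and $M_2$, so that $u_i(M_1)$ and $u_i(M_2)$ are i.i.d.\ copies of $x_i$ for every agent $i$; apply Lemma~\ref{lm:str} to the pair $(M_1,M_2)$ to obtain the balanced matchings $N_1,N_2$; and finally output $N_1$ or $N_2$, each with probability $1/2$. The random variable $y_i$ is the utility of agent $i$ in this final output. All of the randomness of $F^*$ thus comes from two sources: the two independent runs of $F$, and the single fair coin used in the last step.

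First I would record exactly what Lemma~\ref{lm:str} buys us once the pair $(M_1,M_2)$ is fixed. Writing $a_i=u_i(N_1)$ and $b_i=u_i(N_2)$, the lemma guarantees $a_i+b_i=u_i(M_1)+u_i(M_2)$ and $|a_i-b_i|\le 2$. Because the final coin is fair, conditioning on $(M_1,M_2)$ the variable $y_i$ takes the value $a_i$ or $b_i$ with probability $1/2$ each, so its conditional mean is $\operatorname{E}[y_i\mid M_1,M_2]=\tfrac{a_i+b_i}{2}=\tfrac{u_i(M_1)+u_i(M_2)}{2}$ and its conditional variance is the two-point variance $\bigl(\tfrac{a_i-b_i}{2}\bigr)^2\le 1$, using $|a_i-b_i|\le 2$.

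From here the statement follows by taking expectations over the two runs and applying the law of total variance. The mean is immediate: $\operatorname{E}[y_i]=\operatorname{E}\bigl[\tfrac{u_i(M_1)+u_i(M_2)}{2}\bigr]=\operatorname{E}[x_i]$, since each run is distributed as $x_i$. For the variance, the ``within'' term $\operatorname{E}[\operatorname{Var}(y_i\mid M_1,M_2)]$ is at most $1$ by the conditional bound above, while the ``between'' term is $\operatorname{Var}\bigl(\tfrac{u_i(M_1)+u_i(M_2)}{2}\bigr)$; independence of the two runs makes this equal to $\tfrac14\bigl(\operatorname{Var}(x_i)+\operatorname{Var}(x_i)\bigr)=\tfrac{\operatorname{Var}(x_i)}{2}$. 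Adding the two contributions yields $\operatorname{Var}(y_i)\le \tfrac{\operatorname{Var}(x_i)}{2}+1$, as required.

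The conceptual difficulty has already been resolved in Lemma~\ref{lm:str}, so here there is no genuine obstacle beyond invoking the law of total variance correctly. The one point I would flag as needing care is that it is the \emph{sum-preservation} property, not merely the bounded-difference property, that forces the conditional mean to be the symmetric average $\tfrac{u_i(M_1)+u_i(M_2)}{2}$; this is precisely what lets the expectation pass through unchanged and what makes the independence of the two runs halve the between-group variance. I would also remark that, since $F$ is treated as a black box and any matching is an admissible output, this lemma speaks only to mean and variance: the balancing-and-flip step does not disturb the approximation ratio or the incentive properties of the underlying mechanism, which are argued separately.
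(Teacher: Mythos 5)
Your construction of $F^*$ is exactly the paper's: run $F$ twice independently, balance the two matchings via Lemma~\ref{lm:str}, and output $N_1$ or $N_2$ by a fair coin; the mean computation is also the same. Where you genuinely depart is in the variance bookkeeping. The paper conditions on the coin: it writes $u_i(N_1)=\frac{u_i(M_1)+u_i(M_2)}{2}+D_i$ and $u_i(N_2)=\frac{u_i(M_1)+u_i(M_2)}{2}-D_i$, computes $Var(u_i(N_1))$ and $Var(u_i(N_2))$ so that the covariance terms cancel upon averaging, and then invokes the Bhatia--Davis inequality to get $Var(D_i)\leq 1$. You instead condition on $(M_1,M_2)$ and apply the law of total variance: the within-term is the two-point conditional variance $\left(\frac{u_i(N_1)-u_i(N_2)}{2}\right)^2\leq 1$, and the between-term is $Var\left(\frac{u_i(M_1)+u_i(M_2)}{2}\right)=\frac{Var(x_i)}{2}$ by independence. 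Your route is not only more elementary (no Bhatia--Davis needed, only the pointwise bound $|u_i(N_1)-u_i(N_2)|\leq 2$) but also cleaner on a point where the paper is imprecise: the paper's claim that ``the variance of $y_i$ is the average of variances of $u_i(N_1)$ and $u_i(N_2)$'' silently drops the mixture term $\frac{1}{4}\left(E[u_i(N_1)]-E[u_i(N_2)]\right)^2=E[D_i]^2$, which vanishes only if the two balanced matchings happen to have equal expected utility --- something Lemma~\ref{lm:str} does not guarantee. Your conditioning absorbs that term automatically, since the within-term is $E[D_i^2]=Var(D_i)+E[D_i]^2\leq 1$, so your argument yields the stated bound with no gap. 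You also correctly isolate which property of Lemma~\ref{lm:str} does which job: sum-preservation pins the conditional mean (hence the expectation claim), and the bounded difference pins the conditional variance.
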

\begin{proof}
We run mechanism $F$ two times independently. Let $M_1$ and $M_2$ be the random matchings resulting from these two runs. We apply Lemma \ref{lm:str} on  
$M_1$ and $M_2$ and let 
$N_1$ and $N_2$ be the resulting matchings. 
The mechanism $F^*$ chooses one of the two matchings $N_1$ and $N_2$ uniformly at random. 
Note that:
\begin{align*}
E[y_i]=\frac{E[u_i(N_1)]+E[u_i(N_2)]}{2}=\frac{E[u_i(M_1)]+E[u_i(M_2)]}{2}= \frac {2E[x_i]}{2}=E[x_i] 
\end{align*}
where the second equality is an application of Lemma \ref{lm:str}. This means that each agent has the same expected utility in $F$ and $F^*$. 
Now, we need to bound the variance of the utilities of the agents in $F^*$. Let $D_i$ be the difference between $u_i(N_1)$ and the average of $u_i(M_1)$ and $u_i(M_2)$. Note that $u_i(N_1)=\frac {u_i(M_1)+u_i(M_2)}2+D_i$ and $u_i(N_2)=\frac {u_i(M_1)+u_i(M_2)}2-D_i$. Thus, we have
\begin{align*}
Var\left (u_i(N_1)\right ) =& Var\left (\frac {u_i(M_1)+u_i(M_2)}2+D_i\right ) \\=& \frac{u_i(M_1)}{4}+\frac{u_i(M_2)}{4}+Var(D_i)+Cov\left (\frac {u_i(M_1)+u_i(M_2)}2,D_i\right )\\=& \frac{Var(x_i)}{2}+Var(D_i)+Cov\left (\frac {u_i(M_1)+u_i(M_2)}2,D_i\right ).
\end{align*}
We get $Var(u_i(N_2)) = \frac{Var(x_i)}{2}+Var(D_i)-Cov\left (\frac {u_i(M_1)+u_i(M_2)}2,D_i\right )$ in a similar way. 
The variance of $y_i$ is the average of variances of $u_i(N_1)$ and $u_i(N_2)$. Thus, 
\begin{align}\label{lm:eq:var}
Var(y_i)\leq \frac{Var(x_i)} 2 +Var(D_i).
\end{align}
It remains to bound $Var(D_i)$. From Bhatia-Davis Inequality \cite{bhatia00} for any random variable $X$:
$Var(X) \leq (Sup(X) - \mu)(\mu - Inf(X))$, where $\mu$ is the expected value of $X$, $Sup(X)$ is the supremum of $X$ and $Inf(X)$ is the infimum of $X$. By applying Bhatia-Davis Inequality to the random variable $D_i$, we have
\begin{align*}
Var(D_i) &\leq (Sup(D_i) - \mu)(\mu - Inf(D_i)) \leq (1 - \mu)(\mu+1)  \leq 1
\end{align*}
where the second inequality is by definition of $D_i$ and Lemma \ref{lm:eq:var}.
Combining this with inequality (\ref{lm:eq:var}), gives us 
 $Var(y_i)\leq \frac{Var(x_i)} 2 +1$ as desired. \qed
\end{proof}

Lemma \ref{lm:main} provides a way to decrease the variance of the utilities,
iteratively. New we are ready to prove Theorem \ref{1}.

\begin{proofof}{Theorem \ref{1}}

Let $F^0$ be the Mix and Match mechanism and let $F^{i}$ be the combination of two independent runs of $F^{i-1}$ from Lemma \ref{lm:main}. 
We call the mechanism {\em a multi-layered}
mechanism and  $F^i$ denoted $i$-th layer mechanism in the multi-layered 
mechanism. 
It is easy to see that $F^k$ is a combination of $2^k$ independent runs of the 
Mix and Match mechanism which is the layer $0$ mechanism. 
Recall that all those combinations preserve the expected utility of every agent. 
Thus, this process  preserves the social welfare function, which is the sum of utilities of all of the agents. Thus, the assumption that $F^0$ is a $2$-approximation mechanism immediately gives us that $F^k$ is a $2$-approximation mechanism, for any $k$.

Now, we show that for any $k$, $F^k$ is truthful. We prove this by contradiction.
Suppose that $F^k$ is not truthful. Without loss of generality we  assume that if agent $1$ deviates, her expected utility increases. 
Since, $F^k$ preserves the expected utility of each agent, this deviation should increase the expected utility of this agent all the way back to $F^0$. However,  this contradicts the truthfulness of $F^0$. Therefore, $F^k$ is truthful.

Now, we need to bound the variance of the utility of each agent. Without loss of generality, we fix an agent $i$ and bound the variance of the utility 
of that agent. 
The same bound holds for all agents, by symmetry. 

Let $\sigma^2$  be the variance of the utility of the agent in $F^0$. We prove by induction that the variance of the utility of this agent in $F^k$ is $\frac {\sigma^2}{2^k}+2-\frac 2 {2^k}$. The base case is clear since $\frac {\sigma^2}{2^0}+2-\frac 2 {2^0}=\sigma^2$. Assume the bound for  $F^k$
and then we prove it for $F^{k+1}$. 
By applying Lemma \ref{lm:main}, for $F^{k+1}$ the variance of the utility of the agent is at most
\begin{align*}
\frac{\frac {\sigma^2}{2^k}+2-\frac 2 {2^k}}{2}+1 = \frac {\sigma^2}{2^{k+1}}+1-\frac 1 {2^k} + 1=  \frac {\sigma^2}{2^{k+1}}+2 -\frac 2 {2^{k+1}}.
\end{align*}
This completes the induction. The utility of an agent cannot exceed the total number of vertices. Thus, we have $\sigma^2\leq n^2$. If we set $k$ to $2\log(n)+log(\frac 1 {\epsilon})$, then the variance of the utility of each agent in $F^k$ is at most
\begin{align*}
\frac {\sigma^2}{2^k}+2-\frac 2 {2^k} &= \frac {\sigma^2}{2^{2log(n)+log(\frac 1 {\epsilon})}}+2-\frac 2 {2^{2log(n)+log(\frac 1 {\epsilon})}}\\&=\frac {\sigma^2\epsilon}{n^2}+2-\frac {2\epsilon} {n^2}\leq \epsilon+2-\frac {2\epsilon } {n^2} \leq 2+\epsilon.
\end{align*}
We note that the running time is polynomial.
Indeed,  
running  $F^k$, combines $2^{k-i}$ instances of $F^0$, mechanisms. 
Since we set $k$ to $2\log n+log(\frac 1 {\epsilon})$, $F^k$ operates on  $\frac {n^2}{\epsilon}$ instances of $F^0$ and contains $\frac {n^2}{\epsilon}-1$ combinations of matchings in higher levels.
Since here both $F^0$ and the combination process runs in polynomial in $n$,
$F^k$ runs polynomial time as well.\qed
\end{proofof}

\section{An almost truthful deterministic mechanism}\label{sec:det}
In some applications, agents may not accept any risk. In this section, we modify our randomized mechanism to a deterministic one. This deterministic mechanism is not truthful anymore. However, it is almost truthful i.e.,
by deviating from the mechanism, an agent may gain an additive factor of at most 
$2 \lceil \log_2(m)\rceil$, where $m$ is the number of agents. 


%
%

The analysis of the Mix and Match mechanism does not use
the property that the labels of agents are 
fully independent. 
It just uses the fact that for every two fixed agents $i$ and $j$, with probability $0.5$, we assign different labels to agents $i$ and $j$.
In fact, this holds even if we use $m$ {\em pairwise 
independent} random bits. 
We can generate $m$ pairwise independent random bits using $\lceil \log_2(m) \rceil$ 
fully independent random bits \cite{luby1986simple}. We call this modified mechanism that just uses $\lceil \log_2(m) \rceil$ random bits, Modified Mix and Match.

\begin{proofof}{Theorem \ref{2}}
For simplicity of notation, we replace $\log_2(.)$ by $\log(.)$. Our deterministic mechanism can be seen as a multi-layered 
mechanism defined as follows. 
In Layer $0$, 
we run the modified Mix and Match mechanism 
over all possible  values of the $\lceil \log(m) \rceil$ random bits.
The collection of all resulted matchings is called {\em layer $0$ matchings}.
Note that the number of matchings for 
layer $0$ is at most $2^{\lceil \log(m) \rceil}\leq 2m$. 
We now 
describe $\lceil \log(m) \rceil$ steps to combine these matchings together, into a single matching. We note that each layer will halve the number 
of matchings, and so $\lceil \log(m) \rceil$ applications 
of the mechanism give a layer with a single matching and this matching is our output.
 After the $i$-th step we inductively 
construct the matchings of the $i+1$-th layer as follows. We decompose the matchings in the $i$-th layer into arbitrary pairs of matchings. We use the procedure  of Lemma \ref{lm:main} on every pair. Unlike the randomized version, here we always output the matching between 
$N_1$ and $N_2$ that has the largest number of edges.
Clearly, in each step, the number of matchings in the layer is halved. 
Thus, after $\lceil \log(m) \rceil$ steps we have exactly one matchings. 
This matching is the output of our mechanism.

This mechanism contains at most $2m$ runs of the modified Mix and Match mechanism and at most $2m$ combinations of such matchings. 
Both the modified Mix and Match mechanism and the combination procedure run in polynomial time. Thus, this mechanism is a polynomial time mechanism. 

Note that the average number of edges in the $0$-th layer is exactly equal to the expected social welfare in the modified Mix and Match mechanism. 
This follows because every labeling among the $2^{\lceil \log(m) \rceil}$
is equally likely.
In each step, we replace each pair with one of the matching obtained from Lemma \ref{lm:main}. Selecting the matching between 
$N_1$ and $N_2$  that contains the largest number of  edges,
combined with the second property of Lemma \ref{lm:main},
implies that  the average number of edges cannot decrease when we go 
to the next layer. 
Thus, the number of edges in the last-layer matching is at least that of the modified Mix and Match mechanism. Thus, this mechanism is a $2$-approximation mechanism.
We now discuss individual utilities.
The average utility of every agent in layer $0$ 
 exactly equals her expected utility in the modified Mix and Match mechanism. 
Using the first property of Lemma \ref{lm:main}, it is clear that for a fixed agent, the difference between its modified
Mix and Match utility, and the average of her utilities in the $1$-th layer is at most $1$. 
This holds true each time we go from one layer to the next.
Namely, for every agent, the difference in utility between the modified Mix and Match 
strategy and our strategy goes up by at most $1$. 
Thus, in the $i$-th layer the difference between the utility 
in our deterministic mechanism and 
the modified Mix and Match mechanism is at most $i$. 

We now inspect how much an agent can  
gain by deviating from the mechanism.
Since the modified Mix and Match mechanism is truthful, her utility in the modified Mix and Match mechanism does not increase with the new strategy.
The difference between her expected utility in the modified Mix and Match mechanism and our mechanism is at most $\lceil \log(m) \rceil$. 
A non truthful strategy can increase the utility by at most 
an additive factor of $2\lceil \log(m) \rceil$. \qed
\end{proofof}

\bibliography{kidney}
\bibliographystyle{plain}


\end{document}